\newcommand{\myfig}[1]{\includegraphics[height=0.4\textheight]{#1}}
\newcommand{\emp}{\mathrm{emp}}
\begin{document}

\title{Direct Optimization of Ranking Measures}
\author{%
  Quoc V. Le\footnote{{\tt Quoc.Le@tuebingen.mpg.de}, 
    Max Planck Institute for Biological Cybernetics, 
    Spemannstr.~38, 72076 T\"ubingen, Germany}
  ~and~
  Alex J. Smola\footnote{{\tt Alex.Smola@nicta.com.au},
    Statistical Machine Learning Program, 
    NICTA and ANU, Canberra, 2600 ACT, Australia}
} 

\maketitle

\begin{abstract}%
  Web page ranking and collaborative filtering require the optimization
  of sophisticated performance measures. Current Support Vector
  approaches are unable to optimize them directly and focus on pairwise
  comparisons instead. We present a new approach which allows
  \emph{direct} optimization of the relevant loss functions.
  
  This is achieved via structured estimation in Hilbert spaces. It is
  most related to Max-Margin-Markov networks optimization of
  multivariate performance measures. Key to our approach is that during
  training the ranking problem can be viewed as a linear assignment
  problem, which can be solved by the Hungarian Marriage algorithm. At
  test time, a sort operation is sufficient, as our algorithm assigns a
  relevance score to every (document, query) pair. Experiments show that
  the our algorithm is fast and that it works very well.
\end{abstract}

 
\section{Introduction}

Ranking, and web-page ranking in particular, has long been a fertile
research topic of machine learning. It is now commonly accepted that
ranking can be treated as a supervised learning problem, leading to
better performance than using one feature alone
\citep{BurShaRenLazDeeHamHul05, CaoXuLiuLiHuaHon06}. Learning to rank can
be viewed as an attempt of learning an ordering of the data (e.g.~web
pages).  Although ideally one might like to have a ranker that learns
the partial ordering of \emph{all} the matching web pages, users are
most concerned with the topmost (part of the) results returned by the
system. See for instance \citep{CaoXuLiuLiHuaHon06} for a discussion.
 
This is manifest in corresponding performance measures developed in 
information retrieval, such as \emph{Normalized Discounted
  Cumulative Gain (NDCG)}, \emph{Mean Reciprocal Rank (MRR)},
\emph{Precision@n}, or ’\emph{Expected Rank Utility (ERU)}. They are
used to address the issue of evaluating rankers, search
engines or recommender sytems \citep{Voorhees01, JarKek00, BreHecKar98,
  BasHof04}.

Ranking methods have come a long way in the past years. Beginning with
vector space models \citep{Salton71, SalMcG83}, various feature based
methods have been proposed \citep{LeeChuSea97}.  Popular set of features
include BM25 \citep{Robertsonetal94} or its variants \citep{RobHul00}.
Following the intent of \cite{RicPraBri06} we show that when combining
such methods with machine learning, the performance of the ranker can be
increased significantly.

Over the past decade, many machine learning methods have been proposed.
Ordinal regression \citep{HerGraObe00, ChuKee05} using a SVM-like large
margin method and Neural Networks
\citep{BurShaRenLazDeeHamHul05} were some of the first
approaches. This was followed by Perceptrons \citep{CraSin02}, and
online methods, such as \citep{CraSin05, BasHof04}. The state of the art
is essentially to describe the partial order by a directed acyclic graph
and to associate the cost incurred by ranking incorrectly with the edges
of this graph.  These methods aim at finding the best ordering function
over the returned documents. However, it is difficult to express complex
(yet commonly used) measures in this framework.

Only recently two theoretical papers \citep{Rudin06,CosZha06} discuss the
issue of learning ranking with preference to top scoring documents.
However, the cost function of \citep{Rudin06} is only vaguely related to
the cost function used in evaluating the performance of the ranker.
\citep{CosZha06} argue that, in the limit of large samples,
regression on the labels may be sufficient.

Our work uses the framework of Support Vector Machines for Structured
Outputs \citep{TsoJoaHofAlt05,Joachims05,TasGueKol03} to deal with the
inherent non-convexity of the performance measures in ranking. Due to
the capacity control inherent in kernel methods it generalizes well to
test observations \citep{SchSmo02}. The optimization problem we propose
is very general: it covers a broad range of existing criteria in a
plug-and-play manner. It extends to position-dependent ranking and
diversity-based scores.

Of particular relevance are two recent papers \citep{Joachims05, BurLeRag07} to
address the complication of the information retrieval loss functions.
More specifically, \citep{Joachims05} shows that two ranking-related
scores, Precision@n and the Area under the ROC curve, can
be optimized by using a variant of Support Vector Machines for
Structured Outputs (SVMStruct). We use a similar strategy in
our algorithm to obtain a Direct Optimization of Ranking Measures (DORM)
using the inequalities proposed in \citep{TsoJoaHofAlt05}. \citep{BurLeRag07}
considered a similar problem without the convex relaxation and instead
they optimize the nonconvex cost functions directly by only dealing with
their {\em gradients}.

{\bfseries Outline:} After a summary of
structured estimation we discuss performance measures in
information retrieval (Section~3) and we express them as inner
products. In Section~\ref{sec:General} we compute a convex relaxation of the
performance criterion and show how it can be solved efficiently using
linear assignment. Experiments on web search and
collaborative filtering show that DORM is fast and works well.

\section{Structured Estimation}
\label{sec:svmstruct}

In the following we will develop a method to rank objects (e.g.\ 
documents $d$) subject to some query $q$ by means of some function
$g(d,q)$. Obviously we want to ensure that highly relevant documents
will have a high score, i.e.\ a large value of $g$. At the same time, we
want to ensure that the ranking obtained is optimal with respect to the
relevant ranking score. For instance for NDCG@10, i.e.\ a score where
only the first 10 retrieved documents matter, it is not very important
what particular values of a score $g$ will assign to highly irrelevant
pages, provided that they remain below the acceptance threshold.

Obviously, we could use \emph{engineering} skills to construct a
reasonable function (PageRank is an example of such a function).
However, we can also use statistics and machine learning to guide us
find a function that is optimized for this purpose. This leads to a more
optimized way of finding such a function, removing the need for educated
guesses. The particular tool we use is max margin structured estimation,
as described in \cite{TsoJoaHofAlt05}. See the original reference for a
more detailed discussion.

\subsection{Problem Setting}
\label{sec:probset}

Large margin structured estimation, as proposed by
\citep{TasGueKol03,TsoJoaHofAlt05},
is a general strategy to solve estimation problems of mapping $\Xcal \to
\Zcal$ by finding related optimization problems. More concretely it
solves the estimation problem of finding a matching $z \in \Zcal$ from a
set of (structured) estimates, given patterns
$x \in \Xcal$, by finding a function $f(x,z)$ such that
\begin{align}
  z^*(x) := \argmax_{z \in \Zcal} f(x,z).
\end{align}
This means that instead of finding a mapping $\Xcal \to \Zcal$ directly,
the problem is reduced to finding a real valued function on $\Xcal
\times \Zcal$. 

In the ranking case, $x \in \Xcal$ corresponds to a set
of documents with a corresponding query, whereas $z$ would correspond to
the permutation which orders the documents such that the most relevant
ones are ranked first. Consequently $f$ will be a function of the
documents, query, and a permutation. It is then our goal to find such an
$f$ that it is maximized for the ``correct'' permutation. 

To assess how well the estimate $z^*(x)$ performs we need to introduce a
loss function $\Delta(y, z)$, depending on $z$ and some reference
labels $y$, which determines the loss at $z$. For instance, if we
want to solve the binary classification problem $y, z \in \cbr{\pm
  1}$, where $y$ is be the observed label and $z$ the
estimate we could choose $\Delta(y, z) = 1 - \delta_{y, z}$. That is, we
incur a penalty of $1$ if we make a mistake, and no penalty if we
estimate $y = z$. In the regression case, this could be $\Delta(y, z) =
(y - z)^2$. Finally, in the sequence annotation case, where both $y$ and
$z$ are binary sequences, \citep{TasGueKol03,TsoJoaHofAlt05} use the
Hamming loss. In the ranking case, which we will discuss in
Section~\ref{sec:Measures}, the loss $\Delta$ will correspond to the
\emph{relative} regret incurred by ranking documents in a suboptimal
fashion with respect to WTA, MRR, DCG, NDCG, ERU or a similar
criterion. Moreover $y$ will correspond to the relevance scores assigned
to various documents by reference users. 

In summary, it is our goal to find some function $f$ to minimize the
error incurred by $f$ on a set of observations $X = \cbr{x_1, \ldots,
  x_m}$ and reference labels $Y = \cbr{y_1, \ldots, y_m}$
\begin{align}
  \label{eq:remp}
  R_\emp[f, X, Y] := \sum_{i=1}^m \Delta(y_i, \argmax_{z \in
  \Zcal} f(x_i, z)).
\end{align}
We will refer to $R_\emp[f, X, Y]$ as the empirical risk. Direct
minimization of the latter with respect to $f$ is difficult:
\begin{itemize}
\item It is a highly nonconvex optimization problem. This makes
  practical optimization extremely difficult, as the problem has many
  local minima.  
\item Good performance with respect to the empirical risk $R_\emp[f, X,
  Y]$ does \emph{not} result in good performance on an unseen test set. In
  practice, strict minimization of the empirical risk virtually ensures
  bad performance on a test set due to overfitting. This issue has been
  discussed extensively in the machine learning literature (see
  e.g.~\citep{Vapnik82}). 
\end{itemize}
To deal with the second problem we will add a regularization term (here
a quadratic penalty) to the empirical risk. To address the first problem,
we will compute a convex upper bound on the loss $\Delta(y_i, \argmax_{z \in
  \Zcal} f(x_i, z))$. 

\subsection{Convex Upper Bound}

The key inequality we exploit in obtaining a convex upper bound on the
problem of minimizing the loss $\Delta(y, z)$ is the following lemma,
which is essentially due to \citep{TsoJoaHofAlt05}. 
\begin{lemma}
  \label{lem:relax}
  Let $f: \Xcal \times \Zcal \to \RR$ and $\Delta: \Ycal \times \Zcal
  \to \RR$ and let $z_0 \in \Zcal$. Moreover let $\xi \in \RR$. In this
  case, if $\xi$ satisfies  
  \begin{align*}
    f(x,z_0) - f(x,z) \geq \Delta(y,z) - \Delta(y,z_0) - \xi \text{ for all } z
    \in \Xcal,
  \end{align*}
  then $\xi \geq \Delta(y,\argmax_{z \in \Zcal} f(x,z)) - \Delta(y,
  z_0)$. Moreover, the constraints on $\xi$ and $f$ are linear. 
\end{lemma}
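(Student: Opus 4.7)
The plan is to prove this by instantiating the hypothesized inequality at a single, well-chosen point and then using the defining property of $\mathrm{argmax}$ to drop a non-negative term. The hypothesis is a universally quantified lower bound on $\xi$, so in principle picking any $z$ gives a valid bound; the trick is to choose the one that lets us match the right-hand side of the desired conclusion.

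Concretely, first I would set $z^* := \argmax_{z \in \Zcal} f(x,z)$ and specialize the hypothesis to $z = z^*$. This yields
\begin{align*}
  \xi \;\geq\; \Delta(y, z^*) - \Delta(y, z_0) - \bigl(f(x, z_0) - f(x, z^*)\bigr)
      \;=\; \Delta(y, z^*) - \Delta(y, z_0) + \bigl(f(x, z^*) - f(x, z_0)\bigr).
\end{align*}
Second, by definition of $z^*$ we have $f(x, z^*) \geq f(x, z)$ for every $z \in \Zcal$, and in particular $f(x, z^*) - f(x, z_0) \geq 0$ because $z_0 \in \Zcal$. Dropping this non-negative term from the right-hand side preserves the inequality and gives exactly $\xi \geq \Delta(y, z^*) - \Delta(y, z_0)$, which is the claimed bound.

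For the second half of the statement, I would simply observe that each hypothesized constraint, rewritten as
\begin{align*}
  f(x, z_0) - f(x, z) + \xi \;\geq\; \Delta(y, z) - \Delta(y, z_0),
\end{align*}
is affine in the pair $(f, \xi)$: $f$ enters only through the two evaluations $f(x, z_0)$ and $f(x, z)$ (which are linear functionals of $f$), $\xi$ enters with coefficient $1$, and the right-hand side is a constant once $y$, $z$, $z_0$ are fixed. Hence the constraint set is an intersection of half-spaces, so it is convex — which is the property that will matter when this lemma is plugged into the regularized structured-estimation program in the next section.

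There is no real obstacle here; the only subtlety worth flagging is that the hypothesis in the excerpt is stated with ``$z \in \Xcal$'', which is clearly a typo for ``$z \in \Zcal$'' (otherwise the quantification does not match the domain of $f(x,\cdot)$), and the argument I sketched requires the corrected version so that $z = z^*$ is an admissible choice.
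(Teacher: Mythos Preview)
Your proof is correct and follows exactly the same approach as the paper: instantiate the hypothesis at $z^* := \argmax_{z \in \Zcal} f(x,z)$, use $f(x,z^*) \geq f(x,z_0)$ to conclude, and note that $f$ and $\xi$ enter linearly. Your flagging of the $z \in \Xcal$ typo is also apt.
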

\begin{proof}
  Linearity is obvious, as $\xi$ and $f$ only appear
  as individual terms. To see the first claim, denote by $z^*(x) :=
  \argmax_{z \in \Zcal} f(x,z)$. Since the inequality needs to hold for
  all $z$, it holds in particular for $z^*(x)$. This implies
  \begin{align*}
    0 \geq f(x,z_0) - f(x,z^*(x)) \geq \Delta(y,z^*(x)) - \Delta(y,z_0) - \xi.
  \end{align*}
  The first inequality holds by construction of $z^*(x)$. Rearrangement
  proves the claim. 
\end{proof}
Typically one chooses $z_0$ to be the minimizer of $\Delta$ and one
assumes that the loss for $z_0$ vanishes. In this case $\xi \geq
\Delta(z^*)$. Note that this convex upper bound is tight for $z_0 = z^*$
and if the minimal $\xi$ satisfying this inequality is chosen. 

\subsection{Kernels}

The last ingredient to obtain a convex optimization problem is a
suitable function class for $f$. In principle, any class, such as
Decision Trees, Neural Networks, convex combinations of weak learners as
they occur in Boosting, etc.\ would be acceptable. For convenience we
choose $f$ via
\begin{align}
  \label{eq:fexpand}
  f(x,z) = \inner{\Phi(x,z)}{w}.
\end{align}
Here $\Phi(x,z)$ is a feature map and $w$ is a corresponding weight
vector. The advantage of this formulation is that by choosing different
maps $\Phi$ it becomes possible to incorporate prior knowledge
efficiently. Moreover, it is possible to express the arising
optimization and estimation problem in terms of the kernel functions
\begin{align}
  k((x,z), (x',z')) := \inner{\Phi(x,z)}{\Phi(x',z')}
\end{align}
without the need to evaluate $\Phi(x,z)$ explicitly. This allows us to
work with infinite-dimensional feature spaces while keeping the number
parameters of the optimization problem finite. Moreover, we can control
the complexity of $f$ by keeping $\nbr{w}$ sufficiently small (for
binary classification this amounts to \emph{large margin}
classification). 

Denote by $z_i$ the value of $z$ for which $\Delta(y_i, z)$ is
minimized. Moreover, let $C > 0$ be a regularization constant specifying
the trade-off between empirical risk minimization and the quest for a
``simple'' function $f$.  Combining \eq{eq:remp}, Lemma~\ref{lem:relax},
and \eq{eq:fexpand} one arrives at the following optimization problem:
\begin{subequations}
  \label{eq:primal}
  \begin{align}
    \label{eq:primal-obj}
    \mini_{w,\xi} ~  
    &\frac{1}{2} \nbr{w}^2 + C\sum_{i=1}^m \xi_i \\
    \label{eq:primal-cons}
    \text{ subject to } &
    \inner{w}{\Phi(x_i, z_i) - \Phi(x_i, z)} \geq \Delta(y_i, z) - \xi_i \\
  \nonumber
  \text{for all }
  &\xi_i \geq 0 \text{ and } z \in \Zcal \text{ and } i \in \cbr{1, \ldots, m}.
\end{align}
\end{subequations}
In the ranking case we will assume (without
loss of generality) that the documents are already ordered in decreasing
order of relevance. In this case $z_i$ will correspond to the unit
permutation which leaves the order of documents unchanged.

\subsection{Optimization}

One may show \citep{TasGueKol03} that the solution of \eq{eq:primal} is
given by 
\begin{align}
  f(x',z') = \sum_{i, z} \alpha_{iz} k((x_i, z),(x', z')).
\end{align}
This fact is also commonly referred to as the Representer Theorem
\citep{SchSmo02}. The coefficients $\alpha_{iz}$ are obtained by solving
the dual optimization problem of \eq{eq:primal}:
\begin{subequations}
  \label{eq:dual}
  \begin{align}
    \mini_{\alpha} ~ & \frac{1}{2} \sum_{i,j, z, z'} \alpha_{iz} \alpha_{jz'}
    k((x_i, z),(x_j, z'))
    \label{eq:dual-obj}
    -\sum_{i, z} \Delta(y_i, z) \alpha_{iz} \\
    \label{eq:dual-cons}
    \text{subject to} ~ & \sum_{z} \alpha_{i z} \leq C \text{
      and } \alpha_{i z} \geq 0 \text{ for all } i \text{ and } z.
\end{align}
\end{subequations}
Solving the optimization problem \eq{eq:dual} presents a formidable
challenge. In particular, for large $\Zcal$ (e.g.\ the space of all
permutations over a set of documents) the number of variables is
prohibitively large and it is essentially impossible to find an optimal
solution within a practical amount of time. Instead, one may use column
generation \citep{TsoJoaHofAlt05} to find an approximate solution in
polynomial time. The key idea in this is to check the constraints
\eq{eq:primal-cons} to find out which of them are violated for the
current set of parameters and to use this information to improve the
value of the optimization problem. That is, one needs to find
\begin{align}
  \label{eq:colgen}
    \argmax_{z \in \Zcal} \Delta(y_i, z) + \inner{w}{\Phi(x_i, z)},
\end{align}
as this is the term for which the constraint \eq{eq:primal-cons} becomes
tightest. If $\Zcal$ is a small finite set of values, this is best
achieved by brute force evaluation. For binary sequences, one often uses
dynamic programming. In the case of ranking, where $\Zcal$ is the space of
permutations, we shall see that \eq{eq:colgen}
can be cast as a linear assignment problem. 

\begin{algorithm}[tb]
  \caption{Column Generation}
  \label{alg:colgen}
\begin{algorithmic}
  \STATE {\bfseries Input:} data $x_i$, labels $y_i$,
  sample size $m$, tolerance $\epsilon$ 
  \STATE Initialize $S_i = \emptyset$ for all $i$, and $w = 0$.
  \REPEAT
  \FOR{$i=1$ {\bfseries to} $m$} 
  \STATE $w = \sum_{i} \sum_{z \in S_i} \alpha_{iz} \Phi(x_i, z)$
  \STATE $z^* = \argmax_{z \in \Zcal} \inner{w}{\Phi(x_i, z)} +
  \Delta(y_i, z)$
  \STATE $\xi = \max(0, \max_{z \in S_i} \inner{w}{\Phi(x_i, z)}
    + \Delta(y_i, z))$
  \IF{$\inner{w}{\Phi(x_i, z^*)} + \Delta(y_i, z) > \xi + \epsilon$}
  \STATE Increase constraint set $S_i \leftarrow S_i \cup {z^*}$
  \STATE Optimize \eq{eq:dual} using only $\alpha_{iz}$ where $z \in S_i$.
  \ENDIF
  \ENDFOR
  \UNTIL{$S$ has not changed in this iteration}
\end{algorithmic}
\end{algorithm}

The Algorithm \ref{alg:colgen} has good convergence properties. It
follows from \citep{TsoJoaHofAlt05} that it terminates after adding at
most $\max\sbr{2n\bar{\Delta}/\epsilon, 8C \bar\Delta R^2/\epsilon^2}$
steps, where $\bar\Delta$ is an upper bound on the loss $\Delta(y_i, z)$
and $R$ is an upper bound on $\nbr{\Phi(x_i, z)}$.

To adapt the above framework to the ranking setting, we need to address
three issues: a) we need to derive a loss function $\Delta$ for ranking, b) we
need to develop a suitable feature map $\Phi$ which takes document
collections, queries, and permutations into account, and c) we need to
find an algorithm to solve \eq{eq:colgen} efficiently. 

\section{Ranking and Loss Functions}
\label{sec:Measures}

\subsection{A Formal Description}
\label{sec:formaldef}

For efficiency, commercial rankers, search engines, or recommender
systems, usually employ a document-at-a-time approach to answer a query
$q$: a list of candidate documents is evaluated (while retaining a heap
of the $n$ top scoring documents) by evaluating the relevance for a
(document, query)-pari one at a time. For this purpose a score function
$g(d,q)$ is needed, which assigns a score to every document given the
query.\footnote{For computational efficiency (not for theoretical
  reasons) it is \emph{not} desirable that $f$ depends on $\cbr{d_1,
    \ldots, d_l}$ jointly.}  Performance of the ranker is typically
measured by means of a set of labels $y := \cbr{r_1, \ldots, r_l}$ with
$r_i \in [0, \ldots, r_{\max}]$, where $0$ corresponds to 'irrelevant'
and $r_{\max}$ corresponds to 'highly relevant'. Training instances
contain document query pairs that are labelled by experts.  
Such data for commercial search engines or
recommender systems often have less than ten levels of relevance.

At training time we are given $m$ instances of queries $q_i$, document
collections $D_i$ of cardinality $l_i$ and labels $y_i$ with $|y_i| =
|D_i|$. In the context of the previous section a set of documents $D_i$
in combination with a matching query $q_i$ will play the role of a
pattern, i.e.\ $x_i := (q_i, d_{iq}, \ldots, d_{il_i})$. Likewise the
reference labels $r_{ij} \in y_i$ consist of the corresponding expert
ratings for the documents $d_{ij}$.

We want to find some mapping $f$, such that the ordering of a new
collection of documents $d_{1}, \ldots, d_{l}$ obtained by sorting
$g(d_{i}, q)$ agrees well with $y$ in expectation. We
would like to obtain a \emph{single} ranking which will perform well on
the query for a given performance measure, unlike \citep{Matveevaetal06} who
use a cascade of rankers. 

Note that there is also a processing step associated with ranking
documents: for each document query pair $(d,q)$, we have to construct
a feature vector $x$. In this paper, we assume that the feature
vector is given, and also use $(d,q)$ to mean $x$. For instance BM25
\citep{Robertsonetal94, RobHul00}, date, click-through logs \citep{Joachims02b} have proved to be
an effective set of features.

Many widely-used performance measures in the information retrieval
community are irreducibly multivariate and permutation based. By
permutation-based, we mean that the performance measure can be computed
by comparing the two sets of ordering. For example, 'Winner Takes All'
(WTA), Mean Reciprocal Rank (MRR) \citep{Voorhees01}, Mean Average Precision
(MAP), and Normalized Discounted Cumulative Gain (NDCG) \citep{JarKek00}
all fullfil this property.

It is our goal to find a suitable permutation $\pi(D,q,g)$ obtained for
the collection $D$ of documents $d_i$ given the query $q$ and the
scoring function $g$. We will drop the arguments $D, q, g$, wherever it
is clear from the context. Moreover, given a vector $v \in \RR^m$ we
denote by $v(\pi)$ the permutation of $v$ according to $\pi$, i.e.\ 
$v(\pi)_i = v_{\pi(i)}$. Finally, without loss of generality (and for
notational convenience) we assume
that $y$ is sorted in \emph{descending} order, i.e.\ most relevant
documents first. That is, the identical permutation $\pi = \one$ will
correspond to the sorting which returns the most relevant documents
first with respect to the reference labeling. 

Note that $\pi$ will play the role of $z$ of
Section~\ref{sec:svmstruct}. Likewise we will denote by $\Pi$ the space
of all permutations (i.e.\ $\Zcal = \Pi$). 

\begin{table}[tb]
  \centering
  \begin{tabular}{|l||l|} \hline
    Variable & Meaning \\ \hline
    $x = (q, D)$ & document-query pair\\ 
    \hline
    $q_i$ & $i$-th query \\ \hline
    $l_i$ & number of documents for $q_i$\\\hline
    $D_i = \cbr{d_{i1}, \ldots, d_{il_i}}$ & documents for $q_i$\\ \hline
    $r_{ij} \in [0, \ldots, r_{\max}]$ & relevance of document $d_{ij}$
    \\\hline
    $y_i = \cbr{r_{i1}, \ldots, r_{il_i}}$ & reference label \\ \hline
    $f(x, \pi)$ & global scoring function \\\hline
    $g(q_i, d_{ij})$ & individual scoring function \\\hline
    $m$ & number of queries for training\\\hline
  \end{tabular}
\end{table}

\subsection{Scores and Loss}
\label{sec:scoresloss}

\begin{description}
\item[Winner Takes All (WTA):] If the first
  document is relevant, i.e.\ if $y(\pi)_1 = r_1$ the score is 1,
  otherwise 0. We may rewrite this as 
  \begin{align}
    \label{eq:wta}
    \mathrm{WTA}(\pi, y) & = \inner{a(\pi)}{b(y)}\\
    \nonumber
    \text{ where }
    a_i & = \delta(i, 1)
    \text{ and }
    b_i = \delta(r_i, r_1).
  \end{align}
  Note that there may be more than one document which is considered
  relevant. In this case $\mathrm{WTA}(\pi, y)$ will be maximal for
  several classes of permutations. 
\item[Mean Reciprocal Rank (MRR):] We assume that there exists only
  \emph{one} top ranking document. We have
  \begin{align}
    \label{eq:mrr}
    \mathrm{MRR}(\pi, y) & = \inner{a(\pi)}{b(y)}\\
    \nonumber
    \text{ where }
    a_i & = 1/i
    \text{ and }
    b_i = \delta(i, 1).
  \end{align}
  In other words, the reciprocal rank is the inverse of the rank
  assigned to document $d_1$, the most relevant document. MRR
  derives its name from the fact that this quantity is typically
  averaged over several queries.
\item[Discounted Cumulative Gain (DCG and DCG@$n$):] 
  WTA and MRR use only a single entry of $\pi$, namely $\pi(1)$, to
  assess the quality of the ranking. 
  Discounted Cumulative Gains are a more balanced score:
  \begin{align}
    \label{eq:dcg}
    \mathrm{DCG}(\pi, y) & = \inner{a(\pi)}{b(y)}\\
    \nonumber
    \text{ where }
    a_i & = 1/{\log (i+1)}
    \text{ and }
    b_i = 2^{r_i} - 1.
  \end{align}
  Here it pays if a relevant document is retrieved with a high rank, as
  the coefficients $a_i$ are monotonically decreasing. Variants of DCG,
  which do not take all ranks into account, are DCG@$n$. Here 
  $a_i = 1/\log(i+1)$ if $i \leq n$ and $a_i = 0$ otherwise. That is, we
  only care about the $n$ top ranking entries. In search engines the
  \emph{truncation level} $n$ is typically $10$, as this constitutes the
  number of hits returned on the first page of a search.

\item[Normalized Discounted Cumulative Gain (NDCG):]
  A downside of DCG is that its numerical range depends on $y$ (e.g. a collection
  containing many relevant documents will yield a considerably
  larger value at optimality than one containing only irrelevant
  ones). Since $y$ is already sorted it follows that DCG is 
  maximized for the identity permutation $\pi = \one$:
  \begin{align}
    \mathrm{NDCG}(\pi, y) &:= \textstyle \frac{\mathrm{DCG}(\pi,
      y)}{\mathrm{DCG}(\one, y)}\\
    \nonumber
    \text{ and } 
    \mathrm{NDCG}@n(\pi, y) &:= \textstyle \frac{\mathrm{DCG}@n(\pi,
      y)}{\mathrm{NDCG}@n(\one, y)}.
  \end{align}  
  This allows us to define 
  \begin{align}
    \label{eq:ndcg}
    \mathrm{NDCG}(\pi, y) & = \inner{a(\pi)}{b(y)}\\
    \nonumber
    \text{ where } \textstyle
    a_i & = \frac{1}{\log (i+1)}
    \text{ and }
    b_i = \frac{2^{r_i} - 1}{\mathrm{DCG}(\one, y)}.
  \end{align}
  Finally, NDCG@$n$, the measure which this paper focuses on, is
  given by $\inner{a(\pi)}{b(y)}$ where 
  \begin{align}
    \label{eq:ndcg-at-k}
    a_i = 
    \begin{cases}
      \frac{1}{\log (i+1)} & \text{if } i \leq n \\
      0 & \text{else}
    \end{cases}
    \text{and }
    b_i = \textstyle \frac{2^{r_i} - 1}{\mathrm{DCG}@n(\one, y)}.
  \end{align}
\item[Precision@n:] Note that this measure, too, can be expressed by
  $\inner{a(\pi)}{b(y)}$. Here we define
  \begin{align}
    a_i =     
    \begin{cases}
      1/n & \text{if } i \leq n \\
      0 & \text{else}
    \end{cases}
    \text{ and } 
    b_i =     
    \begin{cases}
      1 & \text{if } r_i \text{ correct} \\
      0 & \text{else}
    \end{cases}
    \hspace{-4mm}
  \end{align}
  The main difference to NDCG is that Precision@n has no decay
  factor, weighing the top $n$ answers equally. 

\item[Expected rank utility:] It has an exponential decay in the top
  ranked items and it can be represented as 
\begin{align}
  \mathrm{ERU}(\pi,y) & = \inner{a(\pi)}{b(y)}\\ 
  \nonumber
  \text{ where } a_i & = 
  2^{\frac{1-i}{\alpha - 1}} \text{ and } b_i = \max(r_i - d,0)
\end{align}
Here $d$ is a neutral vote and $\alpha$ is the viewing halflife. The
normalized ERU can also be defined in a similar manner to NDCG. The
(normalized) ERU is often used in collaborative filtering for
recommender systems where the lists of items are often very short.


\end{description}
It is commonly accepted that NDCG@n is a good model of a person's
judgment of a search engine: the results on the first page matter,
between them there should be a decay factor. NDCG has another advantage
that it is more structured and more general than WTA and MRR.  For
collaborative and content filtering, ERU is more popular
\citep{BreHecKar98, BasHof04}.

Since we set out to design a \emph{loss} function as described in
Section~\ref{sec:probset}, we now define the relative loss incurred by
any score of the form $\inner{a(\pi)}{b(y)}$. For convenience we assume
(again) that $\pi = \one$ is the optimal permutation:
\begin{align}
  \label{eq:loss}
  \Delta(y, \pi) := \inner{a(\one)}{b(y)} - \inner{a(\pi)}{b(y)}.
\end{align}

\subsection{Scoring Function}

The final step in our problem setting is to define a suitable function
$f(x, \pi)$ (where $x = (q, D)$) which is maximized for the ``optimal''
permutation. As stated in Section~\ref{sec:formaldef} we require a
function $g(d, q)$ which will assign a relevance score to every
(document, query) pair \emph{independently} at test time. The
Polya-Littlewood-Hardy inequality tells us how we can obtain a suitable
class of functions $f$, given $g$:

\begin{theorem}
  Let $a, b \in \RR^n$ and let $\pi \in \Pi$. Moreover, assume
  that $a$ is sorted in decreasing order. Then $\inner{a}{b(\pi)}$ is
  maximal for the permutation sorting $b$ in decreasing order.
\end{theorem}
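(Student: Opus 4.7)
The plan is to prove this via the standard exchange (or "bubble sort") argument that underlies the rearrangement inequality. First I would fix an arbitrary permutation $\pi \in \Pi$ and show that if $b(\pi)$ is not already sorted in decreasing order, then I can strictly increase (or at least not decrease) the value of $\inner{a}{b(\pi)}$ by a single transposition; iterating this swap process must terminate in the permutation that sorts $b$ in decreasing order, which therefore attains the maximum.

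The key computation is an elementary two-term swap. Suppose there exist indices $i < j$ with $b_{\pi(i)} < b_{\pi(j)}$, and let $\pi'$ be the permutation obtained from $\pi$ by swapping the values at positions $i$ and $j$. Then all terms except those at positions $i, j$ cancel in the difference, leaving
\begin{align*}
  \inner{a}{b(\pi')} - \inner{a}{b(\pi)}
  &= a_i b_{\pi(j)} + a_j b_{\pi(i)} - a_i b_{\pi(i)} - a_j b_{\pi(j)} \\
  &= (a_i - a_j)\bigl(b_{\pi(j)} - b_{\pi(i)}\bigr) \geq 0,
\end{align*}
since $a_i \geq a_j$ (because $a$ is decreasing and $i < j$) and $b_{\pi(j)} > b_{\pi(i)}$ by hypothesis. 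So the swap never decreases the inner product.

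To conclude, I would observe that the number of inversions of $b(\pi)$ (pairs $(i,j)$ with $i < j$ and $b_{\pi(i)} < b_{\pi(j)}$) strictly decreases with each such swap, so the process terminates after finitely many steps at a permutation $\pi^\star$ with no inversions, i.e. one for which $b(\pi^\star)$ is sorted in decreasing order. Since $\inner{a}{b(\pi)} \leq \inner{a}{b(\pi^\star)}$ for the starting $\pi$, and $\pi$ was arbitrary, $\pi^\star$ attains the maximum. There is really no main obstacle here; the only mild subtlety is handling ties in either $a$ or $b$, which is harmless because the swap inequality above is an equality precisely when $a_i = a_j$ or $b_{\pi(i)} = b_{\pi(j)}$, so the maximizer is unique only up to reordering within level sets.
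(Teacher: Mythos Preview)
Your argument is correct: the two-term swap computation is exactly the standard exchange lemma behind the rearrangement inequality, and your termination via a strictly decreasing inversion count is sound (even for non-adjacent swaps, the pair $(i,j)$ itself loses its inversion and intermediate positions can only shed further inversions). The remark about non-uniqueness under ties is also accurate.

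As for comparison with the paper: there is nothing to compare, because the paper does not supply a proof. It simply names the result as the Polya--Littlewood--Hardy (rearrangement) inequality and invokes it as a classical fact. Your write-up is precisely the textbook proof one would give if asked to justify it, so you have filled in what the paper takes for granted.
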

Consequently, if we define
\begin{align}
  \label{eq:f-to-g}
  f(x, \pi) = \sum_{i} g(d_i, q) c(\pi)_i
\end{align}
for some decreasing sequence $c$, the maximizer of $f$ will be the one
which sorts the documents in decreasing order of relevance, as assigned
by $g(d_i, q)$. The expansion \eq{eq:f-to-g} also acts as a guidance
when it comes to designing a feature map $\Phi(x, \pi)$, which will map
\emph{all} documents $d_i$, the query $q$, and the permutation $\pi$
jointly into a feature space. More to the point, it will need to reflect
the decomposition into terms related to individual pairs $(d_i, q)$
only.

\subsection{General Position Dependent Loss}

Before we proceed to solving the ranking problem by defining a suitable
feature map, let us briefly consider the most general case we are able
to treat efficiently in our framework.\footnote{More general cases, such
  as a quadratic dependency on positions, while possible, will typically
  lead to optimization problems which cannot be solved in polynomial
  time.} Assuming that we are given a ranking $\pi$ of documents
$\cbr{d_1, \ldots, d_l}$, we define the loss as
\begin{align}
  \Delta(\pi,y) := \sum_{i,j}^l \pi_{ij} C_{ij}(y) + \mathrm{const.}
\end{align}
That is, for every position $j$ we have a cost $C_{ij}(y)$ which is
incurred by placing document $i$ at position $j$. Clearly \eq{eq:loss}
falls into this category: simply choose $C_{ij} = a_i b(y)_j$. For
instance, we might have a web page ranking problem where the first
position should contain a result from a government-related site, the
second page should contain a relevant page from a user-created site,
etc. In other words, this setting would apply to cases where specific
positions in the ranked list are endowed with specific meanings. 

The problem with this procedure is, however, that estimation and
optimization are somewhat more costly than merely sorting a list of
relevance scores: we would want to have a different scoring function for
each position. In other words, the computational cost is dramatically
increased, both in terms of the number of functions needed to compute
the scores of an element  and in terms of the optimization required to
find a permutation which minimizes the loss $\Delta(\pi, y)$. 

\section{Learning Ranking}
\label{sec:General}

\subsection{The Featuremap}

We now expand on the ansatz of \eq{eq:f-to-g}. The linearity of the
inner product $f(x,\pi) = \inner{\Phi(x,\pi)}{w}$, as given by
\eq{eq:fexpand} requires that we should be able to write $\Phi$ as 
\begin{align}
  \label{eq:feature}
  \Phi(x, \pi) = \sum_{i=1}^l c(\pi)_i \phi(d_i, q)
  \text{ where } c \in \RR^m. 
\end{align}
In this case $\inner{w}{\Phi(D, q, \pi)} = \inner{c(\pi)}{p_i}$ where $p_i =
\inner{w}{\phi(d_i, q)}$. The problem of choosing $\phi(d_i, q)$ is
ultimately data dependent, as we do not know in general what data type
the documents and queries are composed of. We will discuss several
choices in the context of the experiments in Section~\ref{sec:experiments}. 

Eq.~\eq{eq:feature} implies that $f$ is given by 
\begin{align}
  f(x, \pi) = \inner{\Phi(x, \pi)}{w} = 
  \sum_{i=1}^l c(\pi)_i \inner{\phi(d_i, q)}{w}.
\end{align}
Hence we can apply the Polya-Littlewood-Hardy inequality and observe
that it is maximized by sorting the terms $\inner{\phi(d_i, q)}{w}$ in
decreasing order just as $c$. Note that this permutation is easily
obtained by applying QuickSort in $O(l \log l)$ time. To
obtain only the $n$ top-ranking terms we can do even better and only
need to expend $O(l + n \log n)$ time, using a QuickSort-style divide
and conquer procedure.

This leaves the issue of choosing $c$. In general we want to choose it
such that a) the margin of \eq{eq:primal} is maximized and that b) the
average length $\nbr{\Phi(x, \pi)}^2$ is small for good generalization
and fast convergence of the implementation.

Since $\Phi$ is linear in $c$, we could employ a kernel optimization
technique to obtain an optimal value of $c$, such as those proposed in
\citep{BouHer02,OngSmoWil02}. While in principle not impossible, this
leads rather inevitably to a highly constrained (at worst semidefinite)
optimization problem in terms of $c$ and $w$. Obtaining an efficient
algorithm for this more general problem is topic of current research. We
report experimental results for different choices of $c$ in
Section~\ref{sec:experiments}. 

The above reasoning is sufficient to apply the optimization problem
described in Section~\ref{sec:svmstruct} to ranking problems. All that
changes with respect to the general case is the choice of loss function
$\Delta$ and the feature map $\Phi(x,\pi)$. In order to obtain an
efficient optimization algorithm we need to overcome one last hurdle: we
need to find an efficient algorithm for finding constraint violators in
\eq{eq:colgen}.

\subsection{Finding Violated Constraints}

Recall \eq{eq:colgen}. In the context of ranking this means that we need
to find the permutation $\pi$ which maximizes
\begin{align}
  \nonumber
  & {\inner{\Phi(x,\pi)}{w}} + {\Delta(y, \pi)} \\
  \label{eq:colgen-perm}
  = & {\sum_{i=1}^l \inner{\phi(d_i, q)}{w} c(\pi)_i}
  + \inner{a(\one)}{b(y)} - \inner{a(\pi)}{b(y)} \\
  \label{eq:colgen-perm-trans}
  = & \inner{c(\pi)}{g} - \inner{a(\pi)}{b(y)} + \mathrm{const.} 
\end{align}
where $g_i = \inner{\phi(d_i, q)}{w}$. Note that
\eq{eq:colgen-perm-trans} is a so-called \emph{linear assignment
  problem} which can be solved by the Hungarian Marriage method: the
Kuhn-Munkres algorithm 
in cubic time. Maximizing
\eq{eq:colgen-perm-trans} amounts to solving 
\begin{align}
  \argmax_{\pi \in \Pi} \sum_{i=1}^m C_{i, \pi_i}
  \text{ where } C_{ij} = c_j g_i - a_j b_i. 
\end{align}
Note that there is a special case in which the problem \eq{eq:colgen}
can be solved by a simple sorting operation: whenever $a = c$ the
problem reduces to maximizing $\inner{a(\pi)}{g - b(y)}$. This choice,
however, is not always desirable as $a$ may be rather degenerate (i.e.\
it may contain many terms with value zero).

\subsection{Solving the Linear Assignment Problem}

It is well known that there exists a convex relaxation of the problem of
maximizing $\sum_i C_{i,\pi_i}$ into a linear problem which leads to an
optimal solution. 
\begin{align}
  \label{eq:kuhnmunkries}
  \maxi_{\pi} ~ & \tr \pi^\top C\\ 
  \nonumber
  \text{ subject to } &  \sum_{i} \pi_{ij} = 1
  \text{ and } 
  \sum_j \pi_{ij} = 1 \text{ and }
  \pi_{ij} \in \cbr{0, 1}
\end{align}
More specifically, the integer linear program can be relaxed by
replacing the integrality constraint $\pi_{ij} \in \cbr{0,1}$ by $\pi_{ij}
\geq 0$ without changing the solution, since the constraint matrix is
totally unimodular. Consequently the vertices of the feasible polytope
are integral, hence also the solution. The dual is
\begin{align}
  \label{eq:kuhnmunkries-dual}
  \mini_{u, v} &\sum_i u_i + v_i
  \text{ subject to } &u_i + v_i \geq C_{ij}. 
\end{align}
The solution of linear assignment problems is a well studied subject.
The original papers by \cite{Kuhn55} and \cite{Munkres57} implied an
algorithm with $O(l^3)$ cost in the number of terms. Later,
\cite{Karp80} suggested an algorithm with expected quadratic time in the
size of the assignment problem (ignoring log-factors). Finally,
\cite{OrlLee93} propose a linear time algorithm for large problems.
Since in our case the number of pages is fairly small (in the order of
50 to 200), we used an existing implementation due to \cite{JonVol87}.
See Section~\ref{sec:web-search} for runtime details. The latter uses
modern techniques for computing the shortest path problem arising in
\eq{eq:kuhnmunkries-dual}.

This means that we can check whether a particular set of documents and
an associated query $(D_i, q_i)$ satisfies the inequality constraints of
the structured estimation problem \eq{eq:primal}. Hence we have the
subroutine necessary to make the algorithm of Section~\ref{sec:svmstruct}
work. In particular, this is the only subroutine we need to replace in
SVMStruct \citep{TsoJoaHofAlt05}. 

\subsection{In a Nutshell}

Before describing the experiments, let us briefly summarize the overall
structure of the algorithm. In completely analogy to \eq{eq:primal} the
primal optimization problem can be stated as
%
%
\begin{subequations}
  \label{eq:primal-rank}
  \begin{align}
    \label{eq:primal-rank-obj}
    \mini_{w,\xi} ~ 
    &\frac{1}{2} \nbr{w}^2 + C\sum_{i=1}^m \xi_i \\
    \nonumber
    \text{ subject to } &
    \inner{w}{\Phi(x_i, \one) - \Phi(x_i, \pi)} \geq
    \Delta(y_i, \pi) - \xi_i \\
    \label{eq:primal-rank-cons}
    \text{for all }
    &\xi_i \geq 0 \text{ and } \pi \in \Pi \text{ and } i \in \cbr{1, \ldots, m}
  \end{align}
\end{subequations}
%
%
The dual problem of \eq{eq:primal} is given by 
\begin{subequations}
  \label{eq:dual-rank}
\begin{align}
  \nonumber
  \mini_{\alpha} ~ & \frac{1}{2} \sum_{i,j, \pi, \pi' \in \Pi}
  \alpha_{i\pi} \alpha_{j\pi'} k((x_i, \pi),(x_j, \pi')) \\
  &- \sum_{i, \pi \in \Pi} \Delta(y_i, \pi) \alpha_{i\pi}
  \label{eq:dual-rank-obj}
  \\
  \text{subject to} ~ & \sum_{\pi \in \Pi} \alpha_{i\pi} \leq C \text{
  and } \alpha_{i\pi} \geq 0 \text{ for all } i \text{ and } \pi.
\end{align}
\end{subequations}
This problem is solved by Algorithm~\ref{alg:dorm}. Finally, documents
on a test set are ranked by $g(d,q) = \inner{w}{\phi(d,q)}$, where 
$w = \sum_{i,\pi} \alpha_{i\pi} \Phi(x_i, \pi)$.

\begin{algorithm}[tb]
  \caption{Direct Optimization of Ranking Measures}
  \label{alg:dorm}
  \begin{algorithmic}
    \STATE {\bfseries Input:} Document collections $D_i$, queries $q_i$,
    ranks $y_i$, sample size $m$, tolerance $\epsilon$
    \STATE Initialize $S_i = \emptyset$ for all $i$, and $w = 0$.
    \REPEAT
    \FOR{$i=1$ {\bfseries to} $m$} 
    \STATE $w = \sum_{i} \sum_{\pi \in S_i} \alpha_{i\pi} \Phi(x_i,
    \pi)$
    \STATE $\pi^* = \argmax_{\pi \in \Pi} \inner{w}{\Phi(x_i, \pi)} +
    \Delta(\pi, y_i)$
    \STATE $\xi = \max(0, \max_{\pi \in S_i}) \inner{w}{\Phi(x_i, \pi)} +
    \Delta(\pi, y_i)$
    \IF{$\inner{w}{\Phi(x_i, \pi^*)} + \Delta(\pi^*, y_i) > \xi + \epsilon$}
    \STATE Increase constraint set $S_i \leftarrow S_i \cup {\pi^*}$
    \STATE Optimize \eq{eq:dual} using only $\alpha_{i\pi}$ where $\pi \in S_i$.
    \ENDIF
    \ENDFOR
    \UNTIL{$S$ has not changed in this iteration}
  \end{algorithmic}
\end{algorithm}

\section{Extensions}

\subsection{Diversity Constraints}

Imagine the following scenario: when searching for 'Jordan', we will
find many relevant webpages containing information on this subject. They
will cover a large range of topics, such as a basketball player (Mike
Jordan), a country (the kingdom of Jordan), a river (in the Middle
East), a TV show (Crossing Jordan), a scientist (Mike Jordan), a city
(both in Minnesota and in Utah), and many more. Clearly, it is desirable
to provide the user with a diverse mix of references, rather than
exclusively many pages from the same site or domain or topic range. 

One way to achieve this goal is to include an interaction term between
the items to be ranked. This leads to optimization problems of the form
\begin{align}
  \mini_{\pi \in \Pi} \sum_{ijkl} \pi_{ij} \pi_{kl} c_{ij,kl}
\end{align}
where $c_{ij,kl}$ would encode the interaction between items. This is
clearly not desirable, since problems of the above type cannot be solved
in polynomial time. This would render the algorithm impractical for
swift ranking and retrieval purposes. 

However, we may take a more pedestrian approach, which will yield
equally good performance in practice, without incurring exponential
cost. This approach is heavily tailored towards ranking scores which
only take the top $n$ documents into account. We will require that among
the top $n$ retrieved documents no more than one of them may come from
the same source (e.g.\ topic, domain, subdomain, personal
homepage). Nonetheless, we would like to minimize the ranking scores
subject to this condition. 
Formally, we would like to find a matrix $\pi \in \cbr{0, 1}^{l \times
  n}$ such that
\begin{align}
  \sum_{i,j} \pi_{ij} a_i b(y)_j
\end{align}
is maximized, subject to the constraints
\begin{align}
  \sum_{i} \pi_{ij} & = 1 \text{ for all } j \in \cbr{1, n} \\
  \sum_{j} \sum_{i \in B_s} \pi_{ij} & \leq 1 \text{ for all } B_s.
\end{align}
Here the disjoint sets $B_s$ which form a partition of $\cbr{1, \ldots l}$,
correspond to subsets of documents (or webpages) which must not be
retrieved simultaneously. In the above example, for instance all
webpages retrieved from the domain
\emph{http://www.jordan.govoffice.com} would be lumped together into one
set $B_s$. Another set $B_{s'}$ would cover, e.g.\ all webpages from
\emph{http://www.cs.berkeley.edu/$\sim$jordan/}.

It is not difficult to see that during training, we need to solve an
optimization problem of the form
\begin{subequations}
  \label{eq:boringset}
\begin{align}
  \maxi_{\pi} \ & \tr \pi^\top C \\
  \label{eq:boringset-cons}
  \text{subject to } & \sum_{i} \pi_{ij} = 1 \text{ and } \sum_{j}
  \sum_{i \in B_s} \pi_{ij} \leq 1
  \text{ where } 
  \pi \in \cbr{0, 1}^{l \times n}. 
\end{align}
\end{subequations}
We will show below that the constraint matrix of \eq{eq:boringset} is
totally unimodular. This means that a linear programming relaxation of
the constraint set, i.e.\ the change from $\pi_{ij} \in \cbr{0, 1}$ to
$\pi_{ij} \in \sbr{0, 1}$ will leave the solution of the problem
unchanged. This can be seen as follows:

\begin{theorem}[\cite{HelTom56}]
  \label{th:heltom}
  An integer matrix $A$ with $A_{ij} \in \cbr{0, \pm 1}$ is totally
  unimodular if no more than two nonzero entries appear in any column,
  and if its rows can be partitioned into two sets such that:
  \begin{enumerate}
  \item If a column has two entries of the same sign, their rows are in
    different sets;
  \item If a column has two entries of different signs, their rows are
    in the same set. 
  \end{enumerate}
\end{theorem}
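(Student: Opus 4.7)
The plan is to prove total unimodularity by induction on the size $k$ of a square submatrix $B$ of $A$, showing that $\det B \in \cbr{0, \pm 1}$. The base case $k=1$ is immediate since every entry is in $\cbr{0, \pm 1}$. For the inductive step I would split into three cases according to how the columns of $B$ look.

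First, if $B$ has a column that is entirely zero, then $\det B = 0$ and we are done. Second, if $B$ has a column with exactly one nonzero entry (which must be $\pm 1$), I would expand the determinant along that column: the cofactor is a $(k-1)\times(k-1)$ submatrix of $A$, which by the induction hypothesis has determinant in $\cbr{0, \pm 1}$, and multiplying by the lone $\pm 1$ entry keeps $\det B$ in $\cbr{0, \pm 1}$.

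The interesting case is when every column of $B$ has exactly two nonzero entries (it cannot have more by hypothesis). Here I would use the row partition: let $R_1, R_2$ be the induced partition of the rows of $B$, and consider the signed row combination
\begin{align*}
  v := \sum_{i \in R_1} B_{i,\cdot} - \sum_{i \in R_2} B_{i,\cdot}.
\end{align*}
For each column $j$ I would check that $v_j = 0$. If the two nonzero entries in column $j$ have the same sign, by hypothesis (1) their rows lie in different parts of the partition, so they cancel in $v$. If they have opposite signs, hypothesis (2) places them in the same part, and then after applying the $\pm$ sign attached to that part they again cancel. Hence $v = 0$, the rows of $B$ are linearly dependent, and $\det B = 0$.

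The main obstacle, and the only step that is not pure bookkeeping, is spotting the right row combination in the third case; once that signed sum is written down the sign analysis is a two-line case check. Since the three cases are exhaustive (any column has zero, one, or two nonzero entries under the hypothesis), the induction closes and every square submatrix has determinant in $\cbr{0, \pm 1}$, which is the definition of total unimodularity.
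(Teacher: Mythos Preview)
Your argument is correct and is precisely the classical proof of the Heller--Tompkins criterion: induction on the size of the square submatrix, with cofactor expansion when some column has a single nonzero entry, and the signed row sum $\sum_{i\in R_1}B_{i,\cdot}-\sum_{i\in R_2}B_{i,\cdot}$ to exhibit linear dependence when every column has two. The sign analysis in the third case is right.

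Note, however, that the paper does not actually prove this theorem at all: it is quoted as a known result from \cite{HelTom56} and used as a black box to obtain the subsequent corollary about the LP relaxation of \eq{eq:boringset}. So there is nothing to compare against; your proposal simply supplies the (standard) proof that the paper omits.
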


\begin{corollary}
  The linear programming relaxation of \eq{eq:boringset} has an
  integral solution.
\end{corollary}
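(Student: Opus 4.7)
The plan is to instantiate the Heller--Tompkins criterion (Theorem~\ref{th:heltom}) on the constraint matrix of \eq{eq:boringset} and then invoke the standard fact that an LP over a totally unimodular system with integral right-hand side has integral vertices.

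First, I would write the constraints in matrix form $A \pi \leq b$, $A\pi = b$, treating each $\pi_{ij}$ as a single variable (so columns of $A$ are indexed by pairs $(i,j)$). The rows of $A$ split naturally into two families: one row per position $j \in \{1,\ldots,n\}$ coming from the equality $\sum_i \pi_{ij} = 1$, and one row per block $B_s$ coming from $\sum_j \sum_{i \in B_s} \pi_{ij} \leq 1$. All nonzero entries are $+1$, so in particular $A_{rc} \in \{0, \pm 1\}$.

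Next, I would check the column structure. A given variable $\pi_{ij}$ contributes to exactly two rows: the position-row for its column $j$, and the block-row for the unique $B_s$ with $i \in B_s$ (uniqueness uses that the $B_s$ partition $\{1,\ldots,l\}$). Hence every column of $A$ has exactly two nonzero entries, both equal to $+1$. Now take the partition of rows into $R_1 = \{$position rows$\}$ and $R_2 = \{$block rows$\}$. For each column, the two $+1$ entries lie in different sets ($R_1$ and $R_2$), which is exactly the first condition of Theorem~\ref{th:heltom}; the second condition is vacuous since no column has entries of opposite sign. So $A$ is totally unimodular.

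Since $A$ is totally unimodular and the right-hand side vector is integral (all entries are $1$), every vertex of the feasible polytope of the LP relaxation (obtained by replacing $\pi_{ij} \in \{0,1\}$ with $\pi_{ij} \in [0,1]$, noting that the $\pi_{ij} \leq 1$ upper bound is already implied by the assignment equalities together with nonnegativity) is integral. A bounded linear program attains its optimum at a vertex, so the LP optimum is attained at an integral $\pi$, proving the corollary. The only mildly delicate step is verifying that each $\pi_{ij}$ hits exactly one block row — and this is immediate from the assumption that the $B_s$ form a partition; were they merely a cover, totality of unimodularity could fail and this would be the main obstacle.
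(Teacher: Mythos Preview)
Your proof is correct and follows essentially the same route as the paper: both arguments verify that each variable $\pi_{ij}$ appears with coefficient $+1$ in exactly one assignment row and exactly one block row (the latter relying on the $B_s$ forming a partition), then invoke the Heller--Tompkins criterion to conclude total unimodularity. Your version is more explicit about the row partition $R_1, R_2$ and about the final step from total unimodularity to integrality of the LP optimum, but the underlying idea is identical.
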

\begin{proof}
  All we need to show is that in \eq{eq:boringset-cons} each term
  $\pi_{ij}$ only shows up exactly twice with coefficient $1$. This is
  clearly the case since $B_s$ is a partition of $\cbr{1, \ldots, l}$,
  which accounts for one occurrence, and the assignment constraints which
  account for the other occurrence. Hence Theorem~\ref{th:heltom}
  applies. 
\end{proof}
Note that we could extend this further by requiring that weighted
combinations over topics $\sum_{ij} \pi_{ij} w_{is} \leq 1$, where now
the weights $w_{is}$ may be non-integral and the domains where $w_{is}$
is nonzero might overlap. In this case, obviously the optimization
problem cannot be relaxed easily any more. However, it will still
provide useful results when used in combination with integer programming
codes, such as Bonmin \citep{Bonamietal05}. 

Finally, note that at test stage, it is very easy to take the
constraints \eq{eq:boringset-cons} into account: Simply pick the highest
ranking document from each set $B_s$ and use the latter to obtain an
overall ranking. 

\subsection{Ranking Matrix Factorization}

An obvious application of our framework is matrix factorization for
collaborative filtering. The work of
\cite{SreShr05,RenSre05,SreRenJaa05} suggests that regularized matrix
factorizations are a good tool for modeling collaborative filtering
applications. More to the point, Srebro and coworkers assume that they
are given a sparse matrix $X$ arising from collaborative filtering,
which they would like to factorize. 

More specifically, the entries $X_{ij}$ denote ratings by user $i$ on
document/movie/object $j$. The matrix $X \in \RR^{m \times n}$ is
assumed to be sparse, where zero entries correspond to (user,object)
pairs which have not been ranked yet. The goal is to find a pair of
matrices $U, V$ such that $U V^\top$ is close to $X$ for all nonzero
entries. Or more specifically, such that the entries $[U V^\top]_{ij}$
can be used to recommend additional objects. 

However, this may not be a desirable approach, since it is, for
instance, completely irrelevant how accurate our ratings are for
undesirable objects (small $X_{ij}$), as long as we are able to capture
the users preferences for desirable objects (large $X_{ij}$)
accurately. In other words, we want to model the user's \emph{likes} well,
rather than his \emph{dislikes}. In this sense, any indiscriminate
minimization, e.g.\ of a mean squared error, or a large margin error for
$X_{ij}$ is inappropriate. 

Instead, we propose to use a ranking score such as those proposed in
Section~\ref{sec:scoresloss} to evaluate an entire \emph{row} of $X$ at
a time. That is, we want to ensure that $X_{ij}$ is well reflected as a
whole in the estimates for \emph{all objects $j$ for a fixed user $i$}. This means
that we should be minimizing 
\begin{align}
  R_\emp[U, V, X] := \frac{1}{m} \sum_{i} \Delta([U V^\top]_{i\cdot}, X_{i\cdot})
\end{align}
where $\Delta$ is defined as in \eq{eq:loss} and it is understood that
it is evaluated over the nonzero terms of $X_{ij}$ only. This is a
highly nonconvex optimization problem. However, we can, again, find a
convex upper bound by the methods described in \eq{eq:primal}, yielding
a function $\tilde{R}_\emp[U,V,X]$. The technical details are
straightforward and therefore omitted. 

Note that by linearity this upper bound is convex in $U$ and $V$
respectively, whenever the other argument remains fixed. Moreover, note
that $\tilde{R}[U,V,X]$ decomposes into $m$ independent problems in
terms of the users $U_{i\cdot}$, whenever $V$ is fixed, whereas no such
decomposition holds in terms of $V$. 

In order to deal with overfitting, regularization of the matrices $U$
and $V$ is recommended. The trace norm $\nbr{U}_F^2 + \nbr{V}_F^2$ can
be shown to have desirable properties in terms of generalization
\citep{SreAloJaa05}. This suggests an iterative procedure for
collaborative filtering:
\begin{itemize}
\item For fixed $V$ solve $m$ independent optimization problems in terms
  of $U_{i\cdot}$, using the Frobenius norm regularization on $U$. 
\item For fixed $U$ solve one large-scale convex optimization problem in
  terms of $V$. 
\end{itemize}
Since the number of users is typically considerably higher than the
number of objects, it is possible to deal with the optimization problem
in $V$ efficiently. Details are subject to future research. 

\section{Experiments}
\label{sec:experiments}

We address a number of questions: a) is learning
needed for good ranking performance, b) how does DORM (our algorithm)
perform with respect to other algorithms on a number of datasets of
different size and truncation level of the performance criteria, c) how
fast is DORM when compared to similar large margin ranking algorithms,
and d) how important is the choice of $c$ for good performance? 



%

\subsection{Datasets and Experimental Protocol}


\begin{description}
\item[UCI] 
We choose PageBlock, PenDigits, OptDigits, and Covertype
  from the UCI repository mainly to increase the number of different
  datasets on which we may compare DORM to other existing approaches.
  Since they are not primary ranking data, we will discuss the outcomes
  only briefly for illustrative purposes. For PageBlock, PenDigits and
  OptDigits we sample 50 queries and 100 documents for each query. For
  Covertypes we sample 500 queries and 100 documents. 
\item[Web Seach]
  Our web search dataset (courtesy of Chris Burges at Microsoft
  Research) consists of 1000 queries each for training, validation and
  testing. They are provided and selected from a larger pool of training
  data used for a search engine.
  Figure~\ref{fig:stats} shows a histogram of the number of documents
  per query (the median is approximately 50). Documents are ranked
  according to five levels of relevance (1:Bad, 2:Fair, 3:Good,
  4:Excellent, 5:Perfect). Unlabelled documents are treated as Bad. The ratio between
  the five categories (1 to 5) is approximately 75:17:15:2:1. The length
  of the feature vectors is 367 (i.e.\ we are using BM25).
  We evaluate our algorithm with respect to three goals: performance in
  terms of NDCG@$n$, MRR and WTA performance. 
\item[EachMovie]
  This collaborative filtering dataset consists of 2811983 ratings by
  72916 users on 1628 movies. To prove the point that our improvement is
  \emph{not} due to an improved choice of a kernel but rather in the
  improved choice of a \emph{loss function}, we follow the experimental
  setup, choice of kernels, and pre-processing of \citep{BasHof04} and
  compare performance using ERU. We also use the experimental setup of
  \citep{YuYuTreKri06} and compare performance using NDCG and NDCG@10. In
  both cases we are able to improve the results considerably. The
  datasets for both experiments are as provided by Thomas Hofmann at
  Google Research, and Shipeng Yu at Siemens Research respectively.

\item[Protocol] 
Since WebSearch provided  a validation set, we used the latter
for model selection. Otherwise, 10-fold cross validation was used to
adjust the regularization constant $C$. We used linear kernels
throughout, except for the EachMovie datasets, where we followed the
protocols of \citep{BasHof04} and \citep{YuYuTreKri06}. 
This was done to show that the performance improvement we observe
is due to our choice of a better loss function rather than the function
class. Note that NDCG, MRR were rescaled from  $[0,1]$ to $[0,100]$ 
for better visualization.
\end{description}

\subsection{UCI Datasets}

Since the UCI data does not come in the form of multiple queries, we
permute the datasets and randomly subsample documents for each query.

We compare DORM against SVM classification, Precision@n and ROCArea as
reported by \citep{Joachims05} and use NDCG@10 as a performance measure.
Table \ref{table:toy-result} shows that our algorithm significantly
outperforms other methods. $c$ was chosen to be $c_i = 1/(i+1)$ in DORM.
The experiment results show how hard it is to optimize NDCG,
traditional methods perform poorly when we use NDCG as the performance
metric while direct optimization of the correct criterion works very well.
\begin{table}[bt]
  \centering
  \begin{tabular}{|l||r|r|r|r|} \hline
    Dataset & ROCArea & SVM & Prec@10 & DORM \\ \hline
    PageBlocks & $35.9 \pm 7$ & $46.5 \pm 7$& $44.0 \pm 7$& ${\bf 63.7
    \pm 6}$ \\
    PenDigits & $26.2 \pm 8$& $41.5 \pm 4$& $15.6 \pm 3$& ${\bf
    85.2 \pm 3}$ \\
    OptDigits & $26.0 \pm 9$& $26.1 \pm 3$ & $26.2 \pm  3$& ${\bf 76.1 \pm 6}$ \\
    Covertypes & $47.0 \pm 2$& $48.5 \pm 2$& $42.1 \pm 1$& ${\bf
    58.8 \pm 2}$ \\ \hline
  \end{tabular}
  \caption{Performance on UCI data. Bold indicates
  a high significance of $p < 0.0001$ by paired {\it t}-test. \label{table:toy-result}}
\end{table}

\subsection{NDCG for Web Search}
\label{sec:web-search}

\begin{figure}
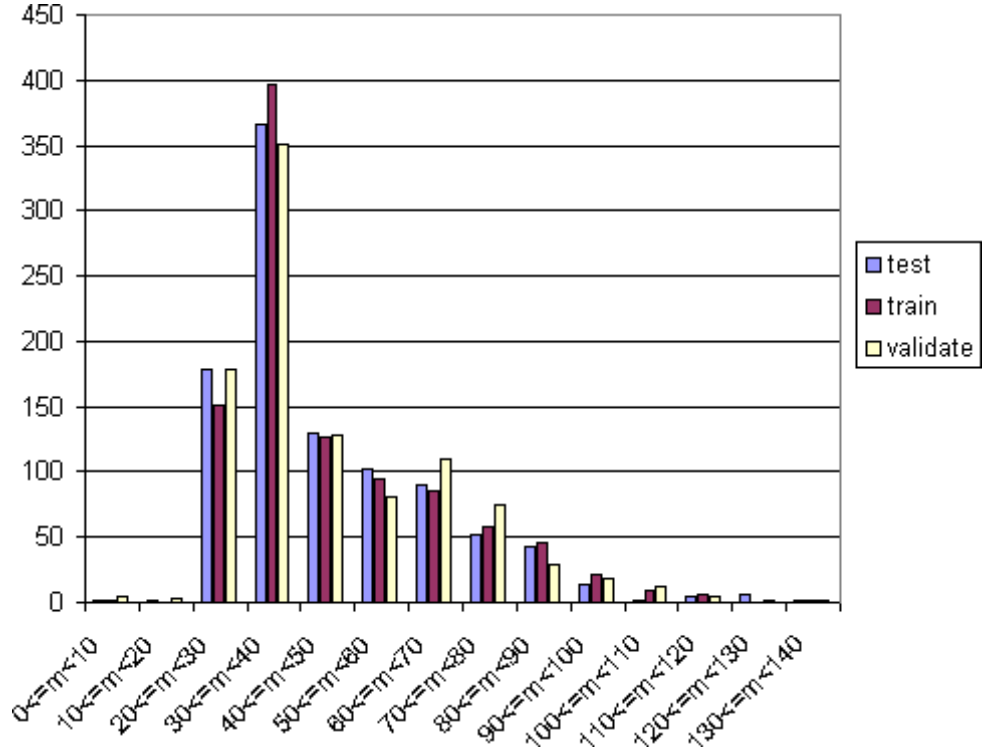

  \centering
  \myfig{stats}
  \caption{Number of documents per query.\label{fig:stats}}
\end{figure}
\begin{figure}
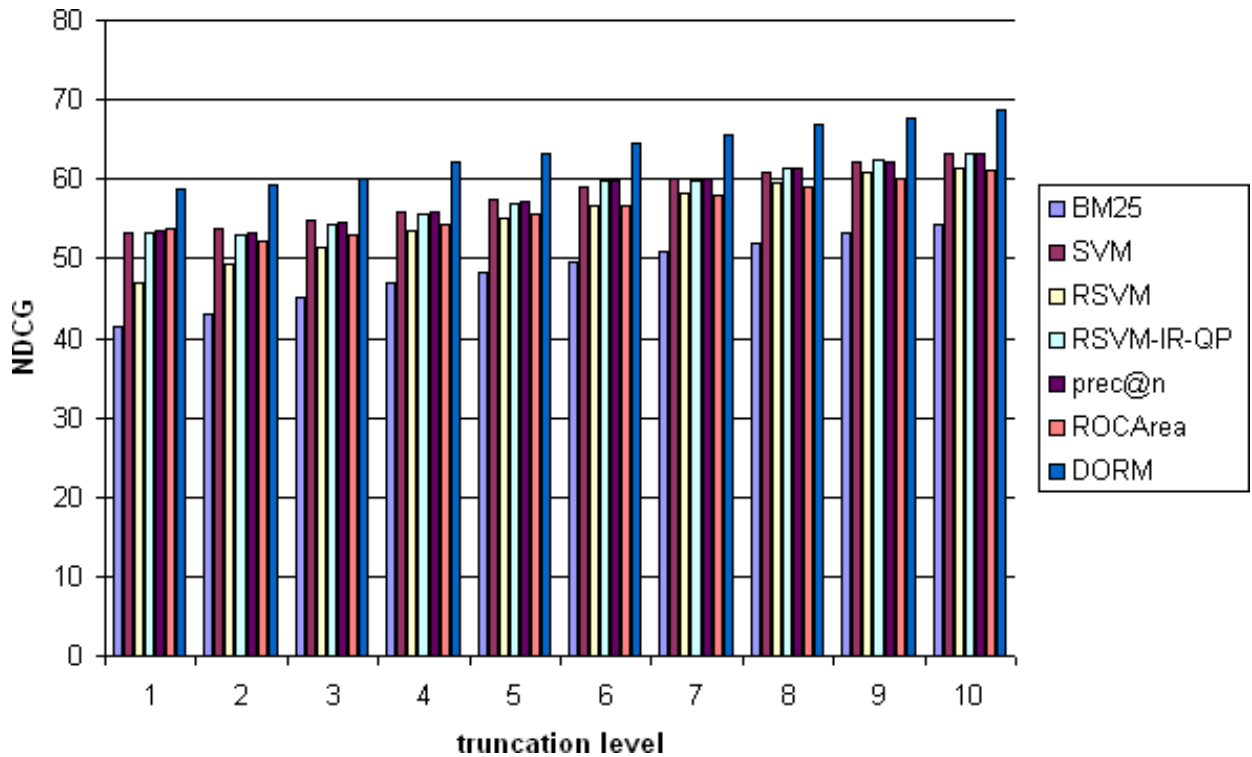

  \centering
  \myfig{ndcglevel}
  \caption{NDCG@n scores on the web search dataset at different
    truncation levels $n$. \label{fig:msn1}}
\end{figure}
\begin{figure}
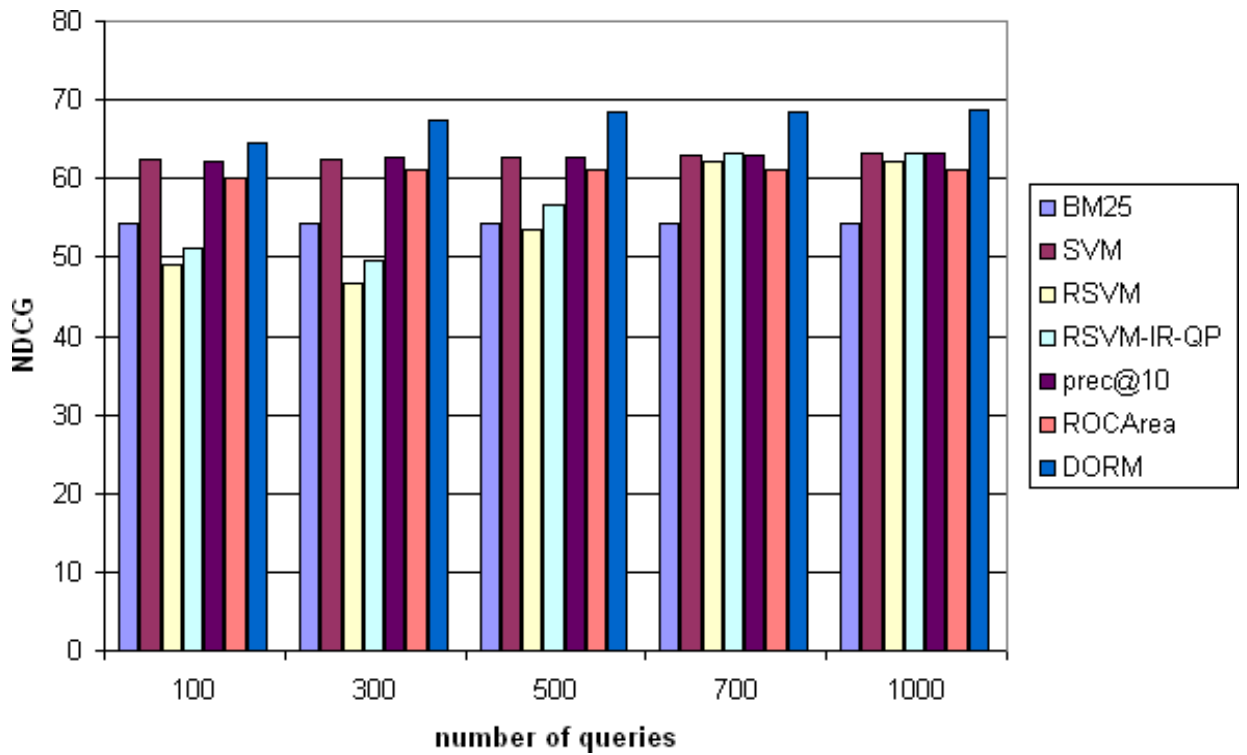

  \centering
  \myfig{newlearningcurve}
  \caption{NDCG@10 scores on the web search dataset for different sample
    sizes. Note that the performance of NDCG on 100 observations is
    larger than of any other competing method for 1000 observations.
    \label{fig:msn2}}
\end{figure}

We compare DORM to a range of kernel methods for ranking: multiclass SVM
classifiers, SVM for ordinal regression (RSVM) \citep{Joachims02b,
  HerGraObe00}, SVM for information retrieval (SVM-IR-QP)
\citep{CaoXuLiuLiHuaHon06}, Precision@n, ROCArea \citep{Joachims05} and
DORM. We use NDCG to assess the performance of the ranking algorithms.
BM25 \citep{Robertsonetal94, RobHul00} is used as a baseline (it also constitutes the
feature vector for the other algorithms).

In the first experiment, we use the full training set (parameter
selection on the validation set) to train our model and report the
performance on the designated test set. We report the average prediction
results for NDCG with various truncation levels ranging from 1 to 10 in
Figure \ref{fig:msn1}. Note that DORM consistently outperforms other
methods for ranking by $2-3\%$. We chose $c_i = \rbr{i+1}^{-1/2}$.

In a second experiment (same choice of $c_i$), we investigate the
effects of increasing training set size using the first $m=$100, 300, 500,
700 or all 1000 queries for training. Using the same experimental
protocol as above we report the average prediction results for NDCG@10
in Figure \ref{fig:msn2}. Our results confirm that the same gain for
NDCG@10 can be achieved if we increase the training set size. Note that
for many methods, doubling the sample size increases NDCG by less than
$1\%$ (Figure \ref{fig:msn2}). DORM achieves the same gain
without the need to double $m$. In fact, DORM using 100
queries for training beats all other methods using 1000 queries!

Since expert-judged datasets can be very expensive, DORM is more
cost-effective.  This confirms that learning to rank is beneficial, as
using a combination of many features for ranking algorithms could result
in better performance than using one or some several features
individually, such as BM25.

The choice of $c$ critically determines the feature map. At the moment,
we have little theoretical guidance with regard to this matter, hence we
investigated the effect of choosing schemes of $c$ experimentally.
Clearly $c$ needs to be a monotonically decreasing function. We chose
$c_i = (i+1)^{-d}$ for $d \in \cbr{\frac{1}{4},\frac{1}{3}, \frac{1}{2}, 1, 2, 3}$
and $c_i = 1/\log(i+2)$ and $c_i = 1/\log \log(i+2)$.

We found experimentally that the differences between the various schemes
are not as dramatic as the improvement obtained by using DORM instead of
other algorithms. To summarize the results we show the difference in
performance in Figure~\ref{fig:difference} for NDCG@10. Note that the
difference in terms of NDCG accuracy resulted by taking different $c$
will decrease when the sample size increases. The rate of convergence
is suspected to be $1/\sqrt{m}$. An possible
interpretation is that the
choice of $c$ can be considered prior knowledge. Thus with 
increasing sample size, we will need to rely less on this prior
knowledge and a reasonable choice of $c$ will suffice. 


\begin{figure}[tb]
   \centering
   \includegraphics[width=\columnwidth, height=0.6\columnwidth]{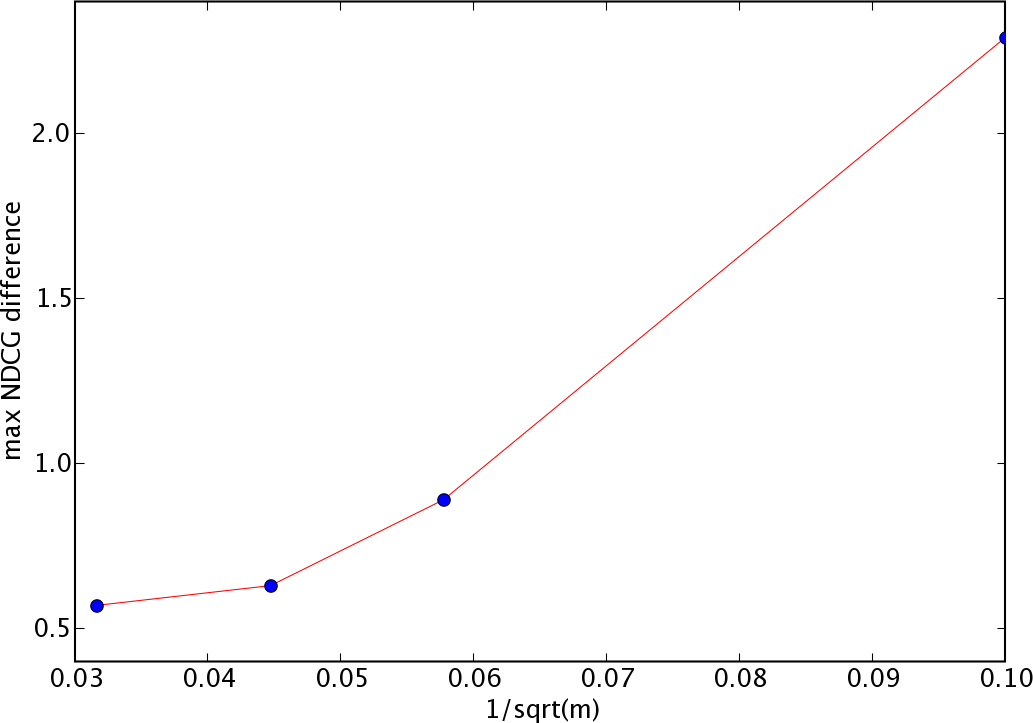}
   \caption{
     Maximum difference in NDCG@10 for different choices of $c$
     with respect to $1/\sqrt{m}$. \label{fig:difference}}
\end{figure}

\subsection{MRR and WTA for Web Search}

\paragraph{MRR}

DORM is effective not only for NDCG but also for other
performance measures. We compare it using the Mean Reciprocal Rank (MRR)
and Winner Takes All (WTA) scores on the same dataset. For comparison we
use Precision@n (with $n = 3$, since this yielded the best results
experimentally), DORM minimizing NDCG (which in this case is the
incorrect criterion), and the previous methods. 

As before, we use the validation set to adjust the regularization
parameter $C$. We picked $c_i = 1/(i+1)$ (the influence of the
particular choice of $c$ was rather minor, as demonstrated in
Figure~\ref{fig:difference}). The average results for varying sample
size (from 100 to 1000) are reported in Figure \ref{fig:msn2-mrr}. 

It can be seen from Figure~\ref{fig:msn2-mrr} that DORM for MRR outperforms all
other models including DORM for NDCG. This is not surprising, since DORM
for MRR optimizes MRR directly while other methods do not. DORM for MRR
beats other methods by $1\%$ to $2\%$ which is quite significant given
that if we double the dataset size, the gain is only around $1\%$. The
fact that the gains in optimizing MRR are less than when optimizing MRR
is probably due to the fact that MRR is less structured than NDCG.

\paragraph{WTA}

This is the least structured of all scores, as it only takes the top
ranking document retrieved into account. This means that for a ranking
dataset where $5$ different degrees of relevance are available, only the
top scoring ones are chosen.  This transformation discards a great deal
of information in the labels (i.e. the gradations among the
lower-scoring documents), which leads to the suspicion that minimizing a
related cost function taking all levels into account should perform
better. 
It turns out, experimentally, that indeed a
direct optimization of the WTA loss function leads to bad
performance. In order to amend this issue, we decided to minimize a
modified NDCG score instead of the straight WTA score. This improved the
performance significantly.

The truncation level in the NDCG@n score should be closer to 1 rather
than 10. We devise a heuristic to find the truncation level:
for queries that have more than 2 level of preference at the top 3
items, the truncation is 3; for queries that have only one level of
preference at the top 3 items, the truncation level is after position
where the next level of preference appears in the ranked list, as for
documents with a large number of top ranking documents we want to
include at least one lower-ranking document in the list. We call
the method mWTA (modified DORM for WTA). 

Experiments with different decay terms for $c$ indicated that $c_i =
1/\sqrt{i+1}$ yields best results. We compare the new method with various
methods using model selection on the validation set and report the total
number of correct predictions in Figure \ref{fig:msn3-wta}. RSVM
and RSVM-IR-QP perform poorly on this task and we omit their results due
to space constraints. While direct optimization of WTA is
unsatisfactory, mWTA outperforms other methods significantly.

\begin{figure}
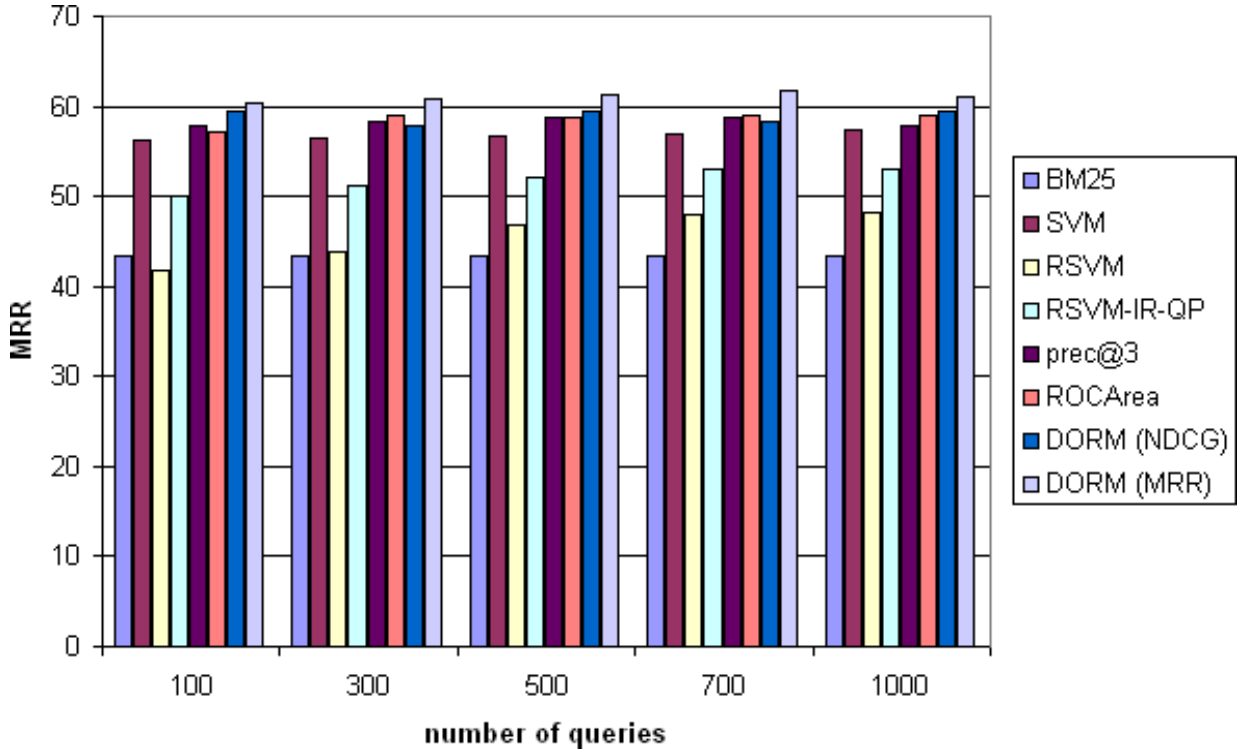

  \centering
  \myfig{mrr_learningcurve}
  \caption{MRR scores on the web search dataset for different sample
    sizes. We compare eight methods (including DORM), using $c_i =
    1/(i+1)$. \label{fig:msn2-mrr}}
\end{figure}
\begin{figure}
  \centering
  \myfig{wta_learningcurve}
  \caption{WTA scores (``I'm feeling lucky'') on the web search dataset
    for different sample sizes. DORM (mWTA) minimizing an adaptive
    version of NDCG outperforms straight WTA minimization, as it makes
    better use of the label information.  
    \label{fig:msn3-wta}}
\end{figure}
\begin{figure}
  \centering
   \includegraphics[width=0.7\textwidth, height=0.5\textwidth]{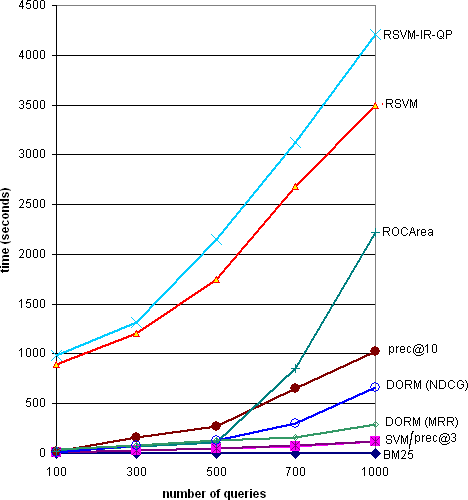}
   \caption{Runtime of DORM (NDCG and MRR
   optimization) vs.\ other algorithms, ignoring file IO. \label{fig:time}}
\end{figure}

\subsection{Runtime Performance}

One might suspect that our formulation is potentially slow, as the
Hungarian marriage algorithm takes cubic time $O(l^3)$. However, each
such optimization problem is relatively small (on average in the order
of 50 documents per query), which means that the overall computational
time is well controlled. 

For practical results we carried out experiments to measure the time for
training \emph{plus} cross validation. We used a modified version of
SVMStruct \citep{TsoJoaHofAlt05}. The algorithms are all written in $C$
and the code was run on a Pentium 4 3.2GHz workstation with 1GB RAM,
running Linux and using GCC 3.3.5. As can be seen in
Figure~\ref{fig:time}, DORM outruns most other methods, except for
multiclass SVM, and BM25 (which does not require training). Note that
ordinal regression algorithms are significantly slower than DORM, as
they need to deal with a huge number of simple inequalities rather than
a smaller number of more meaningful ones. 

\begin{table}
  \centering
  \begin{tabular}{|l||r|r|r|} \hline
    \multicolumn{4}{|c|}{NDCG@10 optimization} \\ \hline
    Method & \# SVs & \# Iter & \% in QP \\ \hline
    Precision@10 & $1037$ & $44$ & $89.02$\\ \hline
    ROCArea & $997$ & $12$& $19.64$\\ \hline
    DORM (NDCG)& $561$ & $22$ & $28.76$\\ \hline
    \multicolumn{4}{c}{ } \\ \hline
    \multicolumn{4}{|c|}{MRR optimization} \\ \hline
    Method & \# SVs & \# Iter & \% in QP \\ \hline
    Precision@3 & $1000$ & $11$ & $46.80$\\ \hline
    ROCArea & $997$ & $12$& $19.64$\\ \hline
    DORM (NDCG) & $520$ & $23$ & $61.66$\\ \hline
    DORM (MRR) & $550$ & $17$ & $1.76$\\ \hline
  \end{tabular}
  \medskip
  \caption{Number of support vectors, iterations in column generation,
    and time spent in the Quadratic Programming loop for various SVM
    style optimization algorithms. Top: optimization for NDCG@10, using
    $c_i = 1/\sqrt{i+1}$. Bottom: optimization for MRR and 
    NDCG, using $c_i = 1/(i+1)$. Corresponding methods have different
    numbers due to the different choices of truncation level (Precision@n)
    and $c$ (DORM NDCG). 
    \label{tab:svsqp}}
\end{table}

\begin{table}
  \centering
  \begin{tabular}{|c||c|c|c|} \hline
    \multicolumn{1}{|c||}{Experiment}
    &\multicolumn{1}{c|}{PRank}
    &\multicolumn{1}{c|}{JRank}
    &\multicolumn{1}{c|}{DORM}\\ \hline 
    1 & $70.8$ & $75.3$ & $76.5 \pm 0.43$\\ \hline
    2 & $73.4$ & $76.2$& $76.7 \pm 0.32$\\ \hline
  \end{tabular}
  \medskip
  \caption{Expected rank utility scores for three
  methods. Results averaged over 100 trials with 100 training users,
  2000 input users and 800 training items.   \label{result-movie}}

  \begin{tabular}{|c|c|c|c|} \hline
    \multicolumn{1}{|c|}{N}
    &\multicolumn{1}{c|}{Method}
    &\multicolumn{1}{c|}{NDCG}
    &\multicolumn{1}{c|}{NDCG@10}\\\hline
    10 & GPR & $83.41 \pm 0.22$ & $45.58 \pm 1.51$\\
       & CGPR & $86.39 \pm 0.24$ & $57.34 \pm 1.44$\\
       & GPOR & $80.59 \pm 0.03$ & $36.92 \pm 0.25$\\
       & CGPOR &$80.83 \pm 0.11$ & $37.89 \pm 1.05$\\
       & MMMF & $84.34 \pm 0.48$ & $47.46 \pm 3.42$\\
       & DORM & ${\bf 87.17 \pm 0.24}$ & ${\bf 61.75 \pm 1.83}$\\\hline
       &      & $p < 0.0001$& $p < 0.0001$\\\hline
   20  & GPR & $ 84.12 \pm 0.15$ & $48.49 \pm 0.66$\\
       & CGPR & $86.98 \pm 0.16$ & $59.89 \pm 1.18$\\
       & GPOR & $80.48 \pm 0.05$ & $36.78 \pm 0.30$\\
       & CGPOR &$80.78 \pm 0.13$ & $37.81 \pm 0.56$\\
       & MMMF & $84.85 \pm 0.28$ & $47.86 \pm 1.39$\\
       &DORM & ${\bf 87.63 \pm 0.37}$ & ${\bf 62.82 \pm 1.9}$\\\hline
       &     &$p < 0.0001$&$ p = 0.0006$\\\hline
   50  &GPR & $85.15 \pm 0.23$ & $53.75 \pm 0.89$\\
       &CGPR &$ 87.82 \pm 0.21$&$63.41 \pm 1.14$\\
       &GPOR &$80.10 \pm 0.04$& $36.63 \pm 0.24$\\
       &CGPOR &$80.45 \pm 0.06$ & $37.74 \pm 0.41$\\
       &MMMF &$86.13 \pm 0.38$& $54.78 \pm 2.11$\\
       &DORM &$87.84 \pm 0.32$& ${\bf 65.05 \pm 1.27}$\\\hline
       &     &$p = 0.8706$&$p = 0.006$\\\hline
  \end{tabular}
  \medskip
  \caption{NDCG optimization on EachMovie
  dataset. Comparison between 6 methods using unpaired {\it t}-test
  with values of $p$ shown (best score vs. second best score).
  \label{table:eachmovie-ndcg}}
\end{table}


Table~\ref{tab:svsqp} has a comparison of the number of support
vectors (the more the slower) number of column generation iterations
(the more the slower), percent of time spent in the QP solver. 
DORM is faster than other algorithms since it has a 
sparser solution. In terms of number of iterations a percent of time in
the QP solver, DORM is a well balanced solution between Precision@n and
ROCArea.
Results are similar when optimizing MRR 
(this is reported in the
bottom of Table~\ref{tab:svsqp}). 
%
Note that since all models use linear functions, prediction times is
less than 0.5s for 1000 queries.

\subsection{EachMovie and Collaborative Filtering}
\label{sec:eachmovie}

\paragraph{ERU}

Past published results on collaborative filtering use
Expected Rank Utility (ERU), NDCG and NDCG@10 as reference scores. In
order to show that the improvement in performance is truly due to a
better loss function rather than a different kernel we use the same
kernels and experimental protocol as proposed by \citep{BasHof04} using the same parameter
combinations in the context of ERU. Table \ref{result-movie} shows the
merit of DORM: it outperforms JRank \citep{BasHof04} and PRank
\citep{CraSin02}. In experiment 1 we used user features in combination
with item correlations. In experiment 2 we used item features in
combination with user ratings. In both cases, results are averaged over
100 trials with 100 training users, 2000 input users and 800 training
items. 

Having used a kernel which is optimal for JRank we expect that
optimizing the kernel further would lead to better results, as there is
no reason to assume that the model class optimal for JRank would be the
best choice for DORM, too.

\paragraph{NDCG}

In a second experiment, we mimicked the experimental
protocol of \citep{YuYuTreKri06} on EachMovie. Here, we treat each movie
as a document and each user as a query. After filtering out all the
unpopular documents and queries (as in \citep{YuYuTreKri06}) we have 1075
documents and 100 users. 

For each user, we randomly select 10, 20 and 50 labeled items for
training and perform prediction on the rest. The process is repeated 10
times independently. The methods for comparison are the standard
Gaussian process regression (GPR) \citep{RasWil06}, Gaussian Process
ordinal regression (GPOR) \citep{ChuGha05}, and their collaborative
extensions (CPR, CGPOR) \citep{YuYuTreKri06}, MMMF \citep{RenSre05} and
DORM (for NDCG). The figures for the first 5 methods are extracted from
\citep{YuYuTreKri06} and scaled by 100 to fit our convention of showing
NDCG results.  We perform an unpaired {\it t}-test for significance
(see Table \ref{table:eachmovie-ndcg}).

Note that there is no cross validation or model selection involved in
\citep{YuYuTreKri06}. Thus to be fair, we fix the following parameters
$c_i = {(i+1)^{-0.25}}$ (performing slightly better in this dataset), $C
= 0.01$ and $n = 10$ (the truncation level of NDCG).
The results show that DORM performs very well for predicting the
ranking for new items, especially when the number of labeled items is
small.

\section{Summary and Discussion}

In this paper we proposed a general scheme to deal with a large range of
criteria commonly used in the context of web page ranking and
collaborative filtering. Unlike previous work, which mainly focuses on
pairwise comparisons we aim to minimize the multivariate performance
measures (or rather a convex upper bound on them) directly. This has
both computational savings, leading to a faster algorithm and
practical ones, leading to better performance. In a way, our work
follows the mantra of \citep{Vapnik82} of estimating \emph{directly} the
desired quantities rather than optimizing a surrogate function. 
There are clear extensions of the current work:
\begin{itemize}
\item The key point of our paper was to construct a well-designed loss
  function for optimization. In this form it is completely generic and
  can be used as a drop-in replacement in many settings. We completely
  ignored language models \citep{PonCro98} to parse the queries in any
  sophisticated fashion.
\item Although the content of the paper is directed towards ranking, the
  method can be generalized for optimizing many other complicated
  multivariate loss functions. 
\item We could use our method directly for information retrieval tasks
  or authorship identification queries. In the latter case, the query
  $q_i$ would consist of a collection of documents written by one
  author.
\item We may add personalization to queries. This is no major problem,
   as we can simply add some personal data $u_i$ to $\phi(q_i, d_i, u_i)$
   and obtained personalized ranking.
\item Online algorithms along the lines of \citep{ShaSin06} can easily be
  accommodated to deal with massive datasets. 
\item The present algorithm can be extended to learn  matching
  problems on graphs. This is achieved by extending the linear assignment problem
  to a quadratic one. The price one needs to pay in this case is that
  the Hungarian Marriage algorithm is no longer feasible, as the
  optimization problem itself is NP hard. 
\end{itemize}
Note that the choice of a Hilbert space for the scoring functions is
done for reasons of convenience. If the applications demand Neural
Networks or similar (harder to deal with) function classes instead of
kernels, we can still apply the large margin formulation. That said, we
find that the kernel approach is well suited to the problem.

\paragraph{Acknowledgments:} We are indebted to Thomas Hofmann, Chris Burges,
and Shipeng Yu for providing us with their datasets for the purpose of
ranking. This was invaluable in obtaining results comparable with their own
publications (as reported in the experiments).
We thank Yasemin Altun, Chris Burges, Tiberio Caetano, David Hawking,
Bhaskar Mehta, Bob Williamson, and
Volker Tresp for helpful discussions. Part of this work was carried out
while Quoc Le was with NICTA. National ICT Australia is funded through
the Australian Government's \emph{Backing Australia's Ability}
initiative, in part through the Australian Research Council. This work
was supported by the Pascal Network.

\bibliography{../../../bibfile}

\end{document}